\theoremstyle{plain}
\newtheorem{theorem}{Theorem}
\newtheorem{lemma}[theorem]{Lemma}
\theoremstyle{definition}
\newtheorem{definition}[theorem]{Definition}
\newcommand*{\comment}[1]{}
\newcommand*{\ket}[1]{| #1 \rangle}
\newcommand*{\cG}{\mathcal{G}}
\newcommand*{\ID}{{\sc id}} %ideal
\newcommand*{\RE}{{\sc r}} %real
\newcommand*{\REt}{{\sc s}} %real two
\newcommand*{\REc}{{\sc t}} 
\newcommand*{\IDp}{{\sc id}$'$}
\newcommand*{\REp}{{\sc r}$'$}
\newcommand*{\REpl}{{\sc r}${}^+$}
\newcommand*{\QM}{\text{{\sc qm}}}
\begin{document}

\title{Unconditionally secure device-independent quantum key
  distribution\\ with only two devices}

\author{Jonathan \surname{Barrett}}
\email[]{jon.barrett@rhul.ac.uk}
\affiliation{Royal Holloway, University of London, Egham Hill, Egham TW20 0EX, U.K.}
\author{Roger \surname{Colbeck}}
\email[]{colbeck@phys.ethz.ch}
\affiliation{Institute for Theoretical Physics, ETH Zurich, 8093
Zurich, Switzerland.}
\author{Adrian \surname{Kent}}
\email[]{a.p.a.kent@damtp.cam.ac.uk}
\affiliation{Centre for Quantum Information and Foundations, DAMTP, Centre for Mathematical Sciences, University of Cambridge, Wilberforce Road, Cambridge, CB3 0WA, U.K.}
\affiliation{Perimeter Institute for Theoretical Physics, 31 Caroline Street North, Waterloo, ON N2L 2Y5, Canada.}
%\date{\today}
\date{$11^{\text{th}}$ October 2012}

\begin{abstract}
  Device-independent quantum key distribution is the task of using
  uncharacterized quantum devices to establish a shared key between
  two users.  If a protocol is secure regardless of the device
  behaviour, it can be used to generate a shared key even if the
  supplier of the devices is malicious.  To date, all
  device-independent quantum key distribution protocols that are known
  to be secure require separate isolated devices for each entangled
  pair, which is a significant practical limitation.  We introduce a
  protocol that requires Alice and Bob to have only one device
  each. Although inefficient, our protocol is unconditionally secure
  against an adversarial supplier limited only by locally enforced
  signalling constraints.
\end{abstract}

\maketitle

\section*{Introduction}
Key distribution is the task of establishing shared secret strings
between two parties, and is sufficient for secure communication.
Classical key distribution protocols base their security on
assumptions about an eavesdropper's computational power.  On the other
hand, quantum key distribution protocols (e.g.~\cite{BB84,Ekert})
promise security against an arbitrarily powerful eavesdropper, and do
so in the presence of realistic noise levels.  However, in order for
the security proofs to apply, the devices must operate according to
certain specifications.  Deviations from these can introduce security
flaws, which can be difficult to identify (see e.g.~\cite{GLLSKM} for
practical illustrations of such attacks).

The difficulty associated with verifying the operation of quantum
devices has led to much interest in device-independent quantum
cryptography protocols.  Ideally, such protocols guarantee security by
tests on the outputs of the devices: no specification of their
internal functionality is required.  In a sense, the protocol verifies
the devices' security \emph{on-the-fly}.

Device-independent cryptography was first introduced by Mayers and
Yao~\cite{MayersYao} (albeit under a different name) and the first
quantum key distribution protocol to be proven device-independently
secure was the Barrett-Hardy-Kent (BHK) protocol~\cite{BHK}.  The BHK
security proof applies not only against an arbitrarily powerful
quantum eavesdropper (who also supplies the devices) but even against
an eavesdropper and device-supplier who has discovered and makes use
of any post-quantum physical theory, provided that, within the theory,
the honest parties can enforce local signalling constraints.  The
applicability of the BHK protocol and proof to device-independent
quantum cryptography was explicitly pointed out by later authors, who
went on to develop some more efficient device-independent protocols
with security proofs against restricted
eavesdroppers~\cite{agm,SGBMPA,abgmps} as well as other protocols
shown to be unconditionally secure~\cite{MRC,Masanes,HRW2,HR,MPA}.

From a theoretical perspective, the BHK protocol provided an existence
theorem for a task that had not been known to be possible.
Practically, however, it has drawbacks.  One is that, as formulated,
it generates only a single bit of secure key.  Although it can be
modified using an idea from~\cite{BKP} to produce an arbitrarily long
key, even with this modification, the protocol is inefficient and
unable to tolerate reasonable levels of noise.

A serious practical problem with all the protocols with proven
unconditional device-independent security~\cite{BHK,MRC,HR,MPA} is
that they require that each (purportedly) entangled pair used in the
protocol is isolated from the others. The protocols thus require a
separate and isolated pair of devices for each entangled pair to
ensure full device-independent security.  This evidently makes such
protocols costly to implement in practice.

We introduce here a protocol that evades this limitation, requiring
only a single device for each user.  Our protocol is a refinement of
the BHK protocol, necessary in order to allow security when used with
only two devices.  As we have discussed elsewhere~\cite{bckone}, the
composability of device independent protocols is problematic if
devices are reused in subsequent implementations. Here we show that if
devices are not reused, then our protocol is secure according to a
universally composable security definition, even against an adversary
who supplies the devices and is restricted only by signalling
constraints. As described, our protocol generates a single secure key
bit.  We also indicate how it can be modified using the idea
in~\cite{BKP}, to produce a key of arbitrary length. In addition,
since it is composable, further key bits can be generated by running
the protocol several times (although in this case, fresh devices are
required for each run). 

We see the value of our protocol as an existence theorem showing that
device-independent quantum key distribution {\it is} in principle
possible with only two devices.  Whether this task can be achieved
more efficiently and with reasonable noise tolerance remains (as far
as we are aware) an open question.

We also show that some apparently natural extensions of existing
protocols to two devices are insecure against eavesdroppers restricted
only by signalling constraints, and in some cases also against quantum
eavesdroppers.  This may have impact in a recent line of work on the
impossibility of privacy amplification against non-signalling
eavesdroppers~\cite{HRW,AHT}.

\section*{Cryptographic scenario}
We use a standard cryptographic scenario for key distribution.  Here,
two users (Alice and Bob), each have a secure laboratory in which to
work, which they may partition into secure sub-laboratories. These
allow Alice and Bob to prevent unauthorized communications between any
devices they use.  They are also each assumed to have (or be able to
generate) their own supply of trusted random bits.  To communicate
between one another, Alice and Bob have access to an authenticated,
but insecure, classical channel, and an insecure quantum channel.
They may process classical information in a trusted way within their
laboratories.  However, any devices they use for quantum information
processing are assumed to be supplied by an untrusted adversary (Eve).
Eve may access (but not modify) any classical correspondence between
Alice and Bob, and may access and modify quantum communication between
them.  She has complete knowledge of the protocol, but does not have
access to the classical random data that Alice and Bob generate within
their labs and use for the protocol (except for information she can
deduce from what they make public).

\section*{Setup for the protocol}
Alice and Bob each have a device, potentially supplied by Eve, that
has an input port with $N\geq 2$ possible inputs and an output port
with $2$ possible outputs.  Alice's inputs are denoted
$A\in\{0,2,\ldots, 2N-2\}$, and Bob's $B\in\{1,3,\ldots, 2N-1\}$, and
their respective outputs are denoted $X\in\{0,1\}$ and $Y\in\{0,1\}$.
We define a set of allowed input pairs $(A,B)$ by $\cG_N:=\{(0,2N-1),
(0,1), (2,1), (2,3), \ldots ,(2N-2,2N-1) \}$, with $|\cG_N|=2N$.  For
convenience, we introduce $X'$ as a variable that is equal to $1-X$ if
$(A,B)=(0,2N-1)$, and equal to $X$ otherwise.

The devices are claimed by Eve to function by carrying out specified
binary outcome measurements on the maximally entangled two qubit state
$\ket{\Phi^+} = \frac{1}{\sqrt{2}}(\ket{00}+\ket{11})$.
Alice's input $A$ is claimed to correspond to measuring the first
qubit in the basis
$\{\cos\frac{\theta}{2}\ket{0}+\sin\frac{\theta}{2}\ket{1},\
\sin\frac{\theta}{2}\ket{0}-\cos\frac{\theta}{2}\ket{1}\}$, where
$\theta=\frac{\pi A}{2N}$; similarly Bob's input $B$ is claimed to
correspond to measuring the second qubit in the basis defined by
$\theta=\frac{\pi B}{2N}$.  Alice and Bob do not need to test these
precise claims, but instead perform various measurements and check
their outcomes in such a way that the checks are unlikely to pass
unless the produced bit is virtually as secure as a bit that would be
generated were Eve's claims correct.

The protocol involves two security parameters: the integer $N\geq 2$
defined above, and a real number $\alpha$ in the range $0<\alpha<1$:
to achieve reasonable security $N$ needs to be large and $\alpha$
small.  All classical communication between Alice and Bob is done via
their authenticated classical channel.

Throughout the protocol, Alice and Bob keep their devices in isolated
parts of their secure laboratories, ensuring that each device only
learns its own inputs and cannot send any information outside the
secure area.  This ensures that the behaviour of the devices, which
can be specified by a conditional probability distribution, satisfies
certain non-signalling constraints.  In particular, if the system
Alice and Bob measure is correlated with a third system with input $C$
and outcome $Z$, then the overall behaviour of the devices,
$P_{XYZ|ABC}$, must be non-signalling, i.e.\ satisfy
\begin{eqnarray}\label{tripartitenonsig}
P_{XY|ABC} &=& P_{XY|AB}\\
P_{YZ|ABC} &=& P_{YZ|BC} \nonumber \\
P_{XZ|ABC} &=& P_{XZ|AC} \nonumber \, .
\end{eqnarray}
These conditions ensure that if three parties possess devices with
this behaviour, no subset of the parties can signal to any other
subset by varying their choice of input.

\subsection*{Protocol~\RE}
\begin{enumerate}
\item \label{step1} Alice randomly chooses $K$, such that $K=0$ with
  probability $1-\alpha$ and $K=1$ with probability $\alpha$.  She
  announces $K$ to Bob.
\item On the $i^{\text{th}}$ round, Alice picks a pair of values
  $(A_i,B_i)$ at random from the set $\cG_N$ specified above, and
  announces them both to Bob\footnote{In fact, Alice need only
    announce $B_i$, but we have her announce both to make the analysis
    simpler.}.
\item Alice inputs $A_i$ into her device, and Bob $B_i$ into his, and
  they record their outputs, the bits $X_i$ and $Y_i$
  respectively. (Alice ensures that her device doesn't learn $B_i$.)
  If $(A_i,B_i)=(0,2N-1)$, Alice sets $X'_i=1-X_i$, otherwise
  she sets $X'_i=X_i$.
\item \label{step4} If $K=0$, Alice and Bob announce
  $X'_i$ and $Y_i$.  If $X'_i\neq Y_i$, they abort.
  Otherwise, they return to Step~\ref{step1}.
\item \label{step6} If $K=1$, write $i=f$ (the final value of $i$).
  The bits $X'_f$ and $Y_f$ are taken to be the final shared
  secret key bit.
\end{enumerate}

As presented above, this protocol requires Alice's and Bob's devices
to contain sufficient pre-shared entanglement before the protocol
starts.  Taken literally, this requires an infinite supply of
pre-shared $\ket{\Phi^+}$ states.  More realistically, it requires a
large number $M \gg \alpha^{-1}$ of pre-shared $\ket{\Phi^+}$ states,
and that the parties accept a small probability of the protocol
aborting because the supply is exhausted.  These stringent
technological requirements can be avoided by introducing an additional
(untrusted) state-creation device, which could be incorporated into
Alice's or Bob's measurement device, and which is supposed to generate
$\ket{\Phi^+}$ states and send one qubit over the insecure quantum
channel to the other party.  The $i^{\text{th}}$ state must be
distributed before any information about the measurements $(A_i, B_i)$
or the value of $K$ is announced.  This modification (call it
Protocol~\REpl ) gives Eve more cheating strategies but, as we show below, 
is still secure.  

\section*{Security -- main idea}
The idea behind the security of this protocol is as follows.  If the
states and measurements are as Eve claims, then the quantity $I_N$
defined by
\begin{align*}
I_N=I_N(P_{XY|AB}):=P(X=Y|A=0,B=2N-1)+\\\sum_{\genfrac{}{}{0pt}{}{a,b}{|a-b|=1}}P(X\neq
Y|A=a,B=b)
\end{align*}
satisfies 
\begin{equation} \label{qcbi}
I_N=I_N^{\QM}:=2N\sin^2\frac{\pi}{4N}<\frac{\pi^2}{8N} \, .
\end{equation}
As $N$ increases, these correlations give larger violations the
chained Bell inequalities~\cite{Pearle,BC}, which in this formulation
are $I_N\geq 1$.

The significance of this violation of the chained Bell inequalities
for secrecy is that, in the limit of large $N$, the correlations that
achieve the quantum bound~\eqref{qcbi} become monogamous and
uniform~\cite{BHK,BKP}.  That is, for any non-signalling distribution
$P_{XYZ|ABC}$ for which $I_N(P_{XY|AB})$ is small, and for any choice
of input $c$, the outcome $Z$ is virtually uncorrelated with $X$, and
$P_{X|A=a}$ is virtually indistinguishable from uniform, for all
$a$.\footnote{A note about that notation used in this paper.  We tend
  to use upper case for random variables, and lower case for
  particular instances of them.  In addition, $P_{X|A=a}$ is the
  distribution over the random variable $X$ conditioned on the event
  that random variable $A$ takes value $a$.  This will often be
  abbreviated to $P_{X|a}$.  There is another common notation in which
  this is written $P(X|A=a)$.} In other words, if Alice's and Bob's
systems have a low $I_N$, then Eve (who we can take to hold the system
with input $C$ and output $Z$) must have almost no information about
the outcomes they obtain.  The protocol is designed so that (roughly
speaking) if Eve supplies states for which there are many rounds in
the protocol where $I_N$ is high, the protocol is likely to abort,
while if she supplies a state that has high $I_N$ on only a few
rounds, the round at which Alice and Bob finally (hope to) create the
key bit is likely to have low $I_N$, and so the key bit is likely to
be indeed both agreed by Alice and Bob and secure against Eve.

Our main result is that, if we choose $\alpha=N^{-\frac{3}{2}}$, and
take $N$ to be large, Protocol~{\RE} is unconditionally secure, in the
sense that the key bit it generates can be treated as though produced
by a secure random key distribution oracle.  Provided that the devices
are not reused and are securely isolated, so that secret information
generated in the protocol cannot subsequently be made
public~\cite{bckone}, this also shows that the generated key bit is
composably secure.

Although the protocol generates only a single key bit, it can be
simply modified to generate more key bits, still using only two
devices, based on correlations introduced in~\cite{BKP}.  The modified
protocol uses devices with $L>2$ outcomes on each side and $I_N$ is
replaced by the quantity
\begin{eqnarray*}
I_{N,L}(P_{XY|AB})&:=&P(X\oplus_L 1\neq Y|A=0,B=2N-1)+\\
&&\sum_{\genfrac{}{}{0pt}{}{a,b}{|a-b|=1}}P(X\neq Y|A=a,B=b),
\end{eqnarray*}
where $X\oplus_L 1$ represents addition modulo $L$.  This protocol can
be implemented by quantum devices containing maximally entangled
$L$-dimensional quantum states and carrying out measurements with $L$
possible outcomes~\cite{BKP}.

The next section contains a precise statement and proof of security
for Protocol~{\RE}.  

\section*{Security definition}
We use here a standard definition of composable security (based on the
definitions in~\cite{Can01}, previously applied in an analogous way to
our treatment in~\cite{HRW2,HR,Hanggi}).  A composable security
definition should ensure that a protocol is not only secure for a
single instance, but that it remains secure if used as a sub-protocol
in part of an arbitrary extended protocol.  In order to show this, one
considers an ideal protocol (that is by definition secure) and proves
that there is no extended protocol that can correctly guess whether it
is interfacing with the ideal or real protocol with probability
significantly greater than~$\frac{1}{2}$.  Roughly speaking, the idea
is that if this holds, the two protocols behave essentially
identically when used as part of any other protocol.  Furthermore, if
the probability of correctly guessing differs from~$\frac{1}{2}$ by at
most $p$, then, for $n$ uses of either the real protocol or idea, the
probability of correctly guessing differs from~$\frac{1}{2}$ by at
most $np$.

Formally, one considers a \emph{distinguisher}, that tries to guess
which protocol (the real or ideal) is being used.  For two key
distribution protocols, 1 and 2, a distinguisher is an extended
protocol that uses the candidate protocol as a sub-protocol, and
outputs a single bit, corresponding to a guess of whether the
sub-protocol was protocol 1 or 2.  The distinguisher can ask the
eavesdropper to act in any way\footnote{Note that what Eve does can be
  adapted depending on any information available to the
  distinguisher.}.  and can use Eve's outputs, those of the honest
parties and any information made public in the protocol's
implementation to try to distinguish the two.  It does not, however,
have access to any private data that the honest users use.

Let us denote by $\Gamma$ the complete set of random variables the
distinguisher receives from Alice and Bob during the protocol, as well
as the protocol's outputs.  If Protocol~1 is followed, these are
distributed according to $Q^1_{\Gamma}$, while if Protocol~2 is
followed, these are distributed according to $Q^2_{\Gamma}$ (for some
fixed device behaviour chosen by Eve).  Having received these, the
distinguisher has access to a system (held by Eve) with input denoted
$C$, and output $Z$.  The probability of correctly guessing whether
Alice and Bob are following Protocol~1 or~2 (chosen with probability
$\frac{1}{2}$ each) is given by\footnote{Note that in the case that
  Eve keeps only a classical system (so there is no $c$), this reduces
  to $\frac{1}{2}(1+D(P^1_{\Gamma Z},P^2_{\Gamma Z}))$, where $D$
  denotes the \emph{total variation distance} (defined later).}.
\begin{equation}\label{eq:defn}
\frac{1}{2}(1+\frac{1}{2}\sum_{\gamma}\max_c\sum_z|Q^1_{\Gamma}(\gamma)Q^1_{Z|\gamma
  c}(z)-Q^2_{\Gamma}(\gamma)Q^2_{Z|\gamma c}(z)|)\, .
\end{equation}
The notion of security we use is based on the success probability of
the optimal distinguisher (i.e.\ where the distinguisher asks Eve to
behave in such a way as to make distinguishing easiest)\footnote{A
  note on notation: we characterize the behaviour of the devices by
  the joint conditional probabilities of the outputs if the inputs are
  chosen independently, and label these using $P$.  For example, in
  the case of three devices shared between Alice, Bob and Eve, these
  are denoted $P_{XYZ|ABC}$ and are assumed to satisfy the
  no-signalling conditions (\ref{tripartitenonsig}).  
 We use expressions involving $Q$ (e.g.,
  $Q_{\Gamma CZ}$) to denote the actual distribution of random
  variables in the scenario where a protocol is being performed on
  these systems in conjunction with a distinguisher.  There is an
  important distinction between the two: since a distinguisher can
  arrange that $C$ is correlated with $\Gamma$, $Q$ may no longer obey
  the no-signalling conditions  (\ref{tripartitenonsig}). 
   For example, if $\Gamma$ includes the
  output, $X$, of Alice's device (whose input is $A$), and the
  distinguisher chooses $C=X$, the non-signalling condition
  $Q_{X|AC}=Q_{X|A}$ does not generally hold.}.

\begin{definition}\label{def:zeta} 
  Protocol 1 is said to be \emph{$\zeta$-secure} with respect to
  Protocol 2 if the probability of correctly guessing whether a
  candidate protocol is Protocol 1 or 2 (chosen with probability
  $\frac{1}{2}$ each) by any distinguisher is at most
  $\frac{1}{2}(1+\zeta)$.
\end{definition}

We define an ideal protocol, Protocol~{\ID}, to be identical to
Protocol~{\RE}, except that Step~\ref{step6} is replaced by
\begin{enumerate}
\item[\ref{step6}$'$.] If $K=1$, Alice and Bob take their outputs from a
  hypothetical device that gives $X$ to Alice, and $Y$ to Bob such
  that $X=Y$ and $X$ is uniformly distributed and uncorrelated with
  any other information.
\end{enumerate}
This protocol either aborts (with the same probability as
Protocol~{\RE}), or outputs the same perfectly private bit to both
Alice and Bob.

In order to prove security of Protocol~{\RE}, it is useful to define a
modified protocol, to be used as a technical tool in the proof.  We
consider a protocol that is the same as Protocol~{\RE}, except with a
more powerful eavesdropper who, before the protocol restarts at the
end of Step~\ref{step4}, has access to all the data previously
produced and can alter Alice's and Bob's devices at this stage.
Formally, let Protocol~{\REp} be identical to Protocol~{\RE}, except
that Step~\ref{step4} is replaced by
\begin{enumerate}
\item[\ref{step4}$'$.] If $K=0$, Alice and Bob publicly announce their
  outputs $X'_i$ and $Y_i$.  If $X'_i\neq Y_i$, they abort.
  Otherwise, they return their devices to Eve who can modify them and
  supply new ones.  Alice and Bob both announce receipt of their new
  devices, before returning to Step~\ref{step1}.
\end{enumerate}
We also define an analogous ideal, Protocol~{\IDp}, which is obtained
from Protocol~{\ID} by replacing Step~\ref{step4} with
Step~\ref{step4}$'$.

The reason for this adjustment is that Protocol~{\REp} clearly cannot
be more secure than Protocol~{\RE} (the set of allowed actions of Eve
in Protocol~{\REp} is strictly larger than that in
Protocol~{\RE}). Hence it is sufficient to prove security of
Protocol~{\REp}. But the analysis of Protocol~{\REp} is relatively
simple, because the optimal distinguisher will ask the eavesdropper to
act in an independent and identically distributed (i.i.d.) way on each
round, and is essentially characterized by the single constant value
of $I_N$ used on each round.

We will show that Protocol~{\REp} is $\zeta$-secure with respect to
Protocol{~\IDp}, where the parameter $\zeta$ can be made arbitrarily
small by appropriate choices of $\alpha$ and $N$.  Since both
protocols have identical probabilities of aborting, an abort event
cannot help the distinguisher.  Furthermore, in any strategy with a
significant probability of not aborting, the protocols remain
virtually indistinguishable.  This shows that Protocol~{\REp} is
composably secure in the appropriate sense.  As mentioned before, it
follows that Protocol~{\RE} is also $\zeta$-secure with respect to
Protocol{~\ID} and hence also composably secure.

\section*{Security proof}
The proof bounds the probability of distinguishing Protocols~{\REp}
and~{\IDp}.  First, note that there is an optimal distinguishing
strategy in which Eve's actions are i.i.d.\ since, if it does not
abort, when the protocol returns to Step~\ref{step1}, the maximum
probability of distinguishing the protocols is identical to that
before the protocol began.

We use the following lemma, that uses $I_N$ to bound the distance
between probability distributions, measured using the total variation
distance, $D(P_X,Q_X):=\frac{1}{2}\sum_x|P_X(x)-Q_X(x)|$.  The proof
of this lemma can be found in [\citenum{CR_ext}, Supplementary
Information] (and is based on similar results
in~\cite{BHK,BKP,ColbeckRenner}):
\begin{lemma}\label{lem:1} \cite{CR_ext}
  For any non-signalling device behaviour, $P_{XYZ|ABC}$, in
  which $X$ and $Y$ are binary, we have
\begin{equation}\label{eq:f}
D(P_{Z|abcx},P_{Z|c})\leq I_N(P_{XY|AB})
\end{equation}
for all $a$, $b$, $c$ and $x$, and
\begin{equation}
D(P_{X|abc},\frac{1}{2})\leq\frac{1}{2}I_N(P_{XY|AB})
\end{equation}
for all $a$, $b$ and $c$.
\end{lemma}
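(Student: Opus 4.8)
The plan is to derive both inequalities from a single chaining argument around the cycle of input pairs $\cG_N$ that defines $I_N$, using the tripartite non-signalling conditions~\eqref{tripartitenonsig} at every step. I would work with the sub-normalised distributions $Q^a_x(z):=P(X=x,Z=z\mid a,c)$ and $R^b_y(z):=P(Y=y,Z=z\mid b,c)$. The conditions $P_{XZ|ABC}=P_{XZ|AC}$ and $P_{YZ|ABC}=P_{YZ|BC}$ ensure that $Q^a_x$ is independent of Bob's input and $R^b_y$ of Alice's input, which is what lets these objects be transported between neighbouring edges of the chain; the condition $P_{XY|ABC}=P_{XY|AB}$ ensures that the correlation weights appearing in $I_N$ are unaffected by Eve's input.

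First I would prove two single-edge estimates. Telescoping gives the identity $Q^a_x(z)-R^b_x(z)=P(X=x,Y\neq x,Z=z\mid a,b,c)-P(X\neq x,Y=x,Z=z\mid a,b,c)$, so summing absolute values over $z$ yields the un-flipped estimate $\sum_z|Q^a_x(z)-R^b_x(z)|\le P(X\neq Y\mid a,b)$; the analogous expansion against $R^b_{1-x}$ gives the flipped estimate $\sum_z|Q^a_x(z)-R^b_{1-x}(z)|\le P(X=Y\mid a,b)$. By construction the $2N-1$ adjacent pairs of $\cG_N$ are exactly those charged by $I_N$ through $P(X\neq Y\mid a,b)$, while the single pair $(0,2N-1)$ is charged through $P(X=Y\mid a,b)$.

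The main step, and the one I expect to require the most care, is the chaining. Starting from $Q^a_0$ at any Alice input $a$ and walking once around the cycle $\cG_N$, I would apply the un-flipped estimate on each adjacent edge and the flipped estimate on the single edge $(0,2N-1)$. Because exactly one flip is incurred, the walk returns to input $a$ with the outcome label reversed, so the triangle inequality yields
\[
\sum_z\bigl|Q^a_0(z)-Q^a_1(z)\bigr|\le I_N(P_{XY|AB}),
\]
the right-hand side being precisely the sum of all $2N$ edge weights. The delicate points are verifying that each edge is used exactly once, that the number of flips is odd (here, one), and that the accumulated bound is independent of the starting input.

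Finally I would read off both inequalities. Setting $Z$ trivial collapses $Q^a_x$ to the number $P(X=x\mid a)$, so the displayed bound becomes $|P(X=0\mid a)-P(X=1\mid a)|\le I_N$, i.e.\ $D(P_{X|abc},\tfrac12)\le\tfrac12 I_N$. For the first inequality, set $p:=P(X=0\mid a)$ and $f_x:=P_{Z|a,c,X=x}$, so that $P_{Z|c}=pf_0+(1-p)f_1$ and the elementary mixture identity for total variation distance gives $D(P_{Z|a,c,X=x},P_{Z|c})=P(X\neq x\mid a)\,D(f_0,f_1)$. Writing $pf_0-(1-p)f_1=p(f_0-f_1)+(2p-1)f_1$ and combining the displayed bound with the uniformity bound $|2p-1|\le I_N$ gives $p\sum_z|f_0-f_1|\le 2I_N$, hence $p\,D(f_0,f_1)\le I_N$; the symmetric rearrangement gives $(1-p)\,D(f_0,f_1)\le I_N$. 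Either way $D(P_{Z|abcx},P_{Z|c})\le I_N$, where $P_{Z|ABCX}=P_{Z|ACX}$ was used to discard Bob's input. The one thing to watch is that the factor $\tfrac12$ in $D$ cancels the factor $2$ coming from $|2p-1|\le I_N$, so that the final bound is exactly $I_N$ rather than $2I_N$.
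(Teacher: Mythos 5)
Your proof is correct. The paper does not actually prove Lemma~\ref{lem:1} itself but defers it to the Supplementary Information of~\cite{CR_ext}; the argument you give --- single-edge telescoping estimates transported around the cycle $\cG_N$ using the non-signalling conditions, with exactly one label flip on the edge $(0,2N-1)$, yielding $\sum_z|Q^a_0(z)-Q^a_1(z)|\le I_N$, and then the mixture identity plus the decomposition $pf_0-(1-p)f_1=p(f_0-f_1)+(2p-1)f_1$ to extract both stated bounds --- is essentially the standard chaining proof used there and in~\cite{BHK,BKP,ColbeckRenner}, with all the constant factors accounted for correctly.
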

(Note: We use $D(P_{X|abc},\frac{1}{2})$ to denote the distance
between $P_{X|abc}$ and the distribution where $X=0$ and $X=1$ both
occur with probability $\frac{1}{2}$.)

% [note: the first of these is equation S6 in~\cite{CR_ext} (using the
% non-signalling conditions to eliminate $a$ and $b$ in the second
% argument of $D$), and the second is written in the text just above
% equation S10]

Note that these relations imply
\begin{equation}\label{eq:DIN}
D(P_{Z|abc,X'=x},P_{Z|c})\leq I_N(P_{XY|AB})
\end{equation}
and
\begin{equation}\label{eq:distX}
D(P_{X'|ab},\frac{1}{2})\leq\frac{1}{2}I_N(P_{XY|AB})\, .
\end{equation}

Note also that, from the definition of $I_N$, averaging over the
measurements in $\cG_N$ (picked uniformly), we have
\begin{eqnarray}
P(X'\neq
Y)&:=&\!\!\!\sum_{\genfrac{}{}{0pt}{}{abx}{(a,b)\in\cG_N}}\!\!\!\frac{P_{X'Y|ab}(x,1\!-\!x)}{2N}\nonumber\\
&=&\frac{I_N(P_{XY|AB})}{2N}\, .\label{correctbound}
\end{eqnarray}

We also need the following generalization of~\eqref{eq:DIN}:
\begin{lemma}
  For any non-signalling device behaviour, $P_{XYZ|ABC}$, in
  which $X$ and $Y$ are binary, and $I_N:=I_N(P_{XY|AB})<1$ we have that,
  for $(a,b)\in\cG_N$,
\begin{eqnarray}\label{eq:withY}
D(P_{Z|abc,X'=x,Y=x},P_{Z|abc,X'=x})\leq\frac{2I_N}{1-I_N}\, .
\end{eqnarray}
%and for $a_0:=0$, $b_0:=2N-1$,
%\begin{eqnarray}\label{eq:withY2}
%D(P_{Z|a_0b_0c,X=1-x,Y=x},P_{Z|a_0b_0c,X=1-x})\leq\frac{2I_N}{1-I_N}\, .
%\end{eqnarray}
\end{lemma}
\begin{proof}
%We will give the proof of the first relation (the second follows by an
%analogous argument).
We have
\begin{widetext}
\begin{align*}
P_{Z|abc,X'=x,Y=x}&(z)-P_{Z|abc,X'=x}(z)=P_{Z|abc,X'=x,Y=x}(z)-\sum_yP_{YZ|abc,X'=x}(y,z)\\
&=P_{Z|abc,X'=x,Y=x}(z)-P_{Y|abc,X'=x}(x)P_{Z|abc,X'=x,Y=x}(z)-P_{Y|abc,X'=x}(1-x)P_{Z|abc,X'=x,Y=1-x}(z)\\
&=(1-P_{Y|abc,X'=x}(x))(P_{Z|abc,X'=x,Y=x}(z)-P_{Z|abc,X'=x,Y=1-x}(z))
\end{align*}
and hence
\begin{eqnarray*}
D(P_{Z|abc,X'=x,Y=x},P_{Z|abc,X'=x})&=&(1-P_{Y|abc,X'=x}(x))D(P_{Z|abc,X'=x,Y=x}(z),P_{Z|abc,X'=x,Y=1-x}(z))\\
&\leq&(1-P_{Y|abc,X'=x}(x))\, .
\end{eqnarray*}
\end{widetext}
Then note that averaging over the measurements in $\cG_N$,
using~\eqref{correctbound} we have
\begin{align*}
1-\frac{I_N}{2N}&=\sum_{\genfrac{}{}{0pt}{}{a'b'x}{(a',b')\in\cG_N}}\frac{1}{2N}P_{X'Y|a'b'}(x,x)\\
&\leq\frac{1}{2N}\left(\sum_xP_{X'Y|ab}(x,x)+2N-1\right)\, ,
\end{align*}
from which it follows that
\begin{align*}
\sum_xP_{X'Y|ab}(x,x)\geq 1-I_N\, ,
\end{align*}
and hence
\begin{align*}
P&_{Y|abc,X'=x}(x)=\frac{P_{X'Y|abc}(x,x)}{P_{X'|abc}(x)}\\
&\geq\frac{1}{P_{X'|ab}(x)}\left(1-I_N-P_{X'Y|abc}(1-x,1-x)\right)\\
&\geq\frac{1}{P_{X'|ab}(x)}\left(1-I_N-(1-P_{X'|ab}(x))\right)\\
&\geq1-\frac{2I_N}{1-I_N}\, ,
\end{align*}
where we used~\eqref{eq:distX} in the last line.  Note that the last
step does not hold unless $I_N<1$.  The claimed relation then follows.
\end{proof}
Combining~\eqref{eq:withY} and~\eqref{eq:DIN} (using the triangle
inequality for $D$), we have for $I_N<1$
\begin{equation}\label{eq:together}
D(P_{Z|abc,X'=x,Y=x},P_{Z|c})\leq (1+\frac{2}{1-I_N})I_N\, .
\end{equation}

To successfully distinguish the protocols it is necessary that they do
not abort before the final round.  We use $\bot$ to represent the
event that the protocol aborts, and $\bar{\bot}$ to represent the
event that it does not.

\begin{lemma}\label{lem:2}
  For $0\leq I_N^*\leq 2N$, if Protocol~{\REp} is followed, and Eve
  supplies i.i.d.\ states corresponding to non-signalling
  device behaviours with $I_N(P_{X_iY_i|A_iB_i}) = I_N^*$ for all $i$,
  then
$$Q(\bar{\bot})=  \left(1+\frac{(1-\alpha)I_N^*}{2N\alpha}\right)^{-1}.$$
\end{lemma}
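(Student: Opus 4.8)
The plan is to recognize that, under the i.i.d.\ assumption, the non-abort event is governed by a simple geometric process over the rounds of the protocol. First I would decompose each iteration of the loop (Steps~\ref{step1}--\ref{step4}$'$) into three mutually exclusive outcomes, determined by the fresh coin flip for $K$ and, when $K=0$, the consistency check. With probability $\alpha$ the round is a key round ($K=1$), at which point the protocol terminates and does \emph{not} abort. With probability $1-\alpha$ it is a test round ($K=0$), and the protocol then either continues (if $X'_i=Y_i$) or aborts (if $X'_i\neq Y_i$).

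Next I would compute the per-round probability that a test round passes. Since Alice picks $(A_i,B_i)$ uniformly from $\cG_N$ and the device behaviour is fixed with $I_N(P_{X_iY_i|A_iB_i})=I_N^*$, equation~\eqref{correctbound} gives exactly the averaged failure probability $P(X'\neq Y)=I_N^*/(2N)$, so a test round passes with probability $1-I_N^*/(2N)$. Writing $c:=(1-\alpha)\bigl(1-I_N^*/(2N)\bigr)$ for the probability of continuing to the next iteration, and using that, by the i.i.d.\ assumption, the situation resets identically each time the loop returns to Step~\ref{step1}, the event $\bar{\bot}$ is precisely the event that some key round is reached after a sequence of successful test rounds.

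Then I would sum over the number of test rounds preceding the key round: the protocol does not abort exactly when, for some $k\geq 0$, the first $k$ rounds are passing test rounds and the $(k{+}1)$-th round is a key round, an event of probability $c^k\alpha$. Since $c<1$, the event of never reaching a key round has probability zero, and summing the geometric series gives
$$Q(\bar{\bot})=\sum_{k=0}^{\infty}c^k\alpha=\frac{\alpha}{1-c}=\frac{\alpha}{\alpha+(1-\alpha)I_N^*/(2N)}.$$
Dividing numerator and denominator by $\alpha$ then yields the claimed closed form $\bigl(1+\tfrac{(1-\alpha)I_N^*}{2N\alpha}\bigr)^{-1}$.

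The calculation itself is routine; the only point requiring care is the justification that each loop iteration is an independent, identical trial. This rests on the i.i.d.\ assumption on Eve's supplied states together with the fresh uniform choice of $(A_i,B_i)\in\cG_N$ on each round, which together guarantee that the pass probability~\eqref{correctbound} equals the same value $1-I_N^*/(2N)$ on every round and is independent of the history, so that the geometric sum is valid.
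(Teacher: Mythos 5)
Your proposal is correct and follows essentially the same route as the paper: the paper likewise writes $Q(f=j)=\bigl((1-\alpha)(1-I_N^*/(2N))\bigr)^{j-1}\alpha$ using the per-round pass probability from~\eqref{correctbound}, and sums the geometric series over $j$ to obtain the stated closed form. Your additional remarks justifying the i.i.d.\ resetting of the loop and the vanishing probability of never reaching a key round are sound and merely make explicit what the paper leaves implicit.
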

\begin{proof}
  We have
$$Q(f=j)=\left((1-\alpha)(1-\frac{I_N^*}{2N})\right)^{j-1}\alpha \, ,$$
and hence
\begin{eqnarray*}
Q(\bar{\bot} )
= \sum_{j=1}^{\infty}Q(f=j)=\left(1+\frac{(1-\alpha)I_N^*}{2N\alpha}\right)^{-1}\, , 
\end{eqnarray*}
as required.
\end{proof}

Our main result is then as follows
\begin{theorem}
  Take $\alpha=N^{-\frac{3}{2}}$.  Then Protocol~{\REp} is
  $\zeta$-secure with respect to~{\IDp} for $\zeta=\frac{23}{2}N^{-1/2}$.
  Furthermore, in a noise-free implementation with honest devices,
  Protocol~{\REp} does not abort with probability greater than
  $(1+\pi^2N^{-1/2}/16)^{-1}$.
\end{theorem}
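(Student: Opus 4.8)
The plan is to bound the distinguishing advantage $\zeta$ defined through~\eqref{eq:defn} directly, and then to verify the no-abort claim as a short consequence of Lemma~\ref{lem:2}. First I would invoke the i.i.d.\ reduction already noted: an optimal distinguisher has Eve supply, on every round, the same non-signalling box with a single fixed value $I_N^* = I_N(P_{X_iY_i|A_iB_i}) \in [0,2N]$, so the entire estimate is parametrised by $I_N^*$ and $N$ (with $\alpha = N^{-3/2}$ throughout).

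Next I would organise $\Gamma$ as (the history of the $K=0$ rounds, the final inputs $(a,b)\in\cG_N$, the final outputs $X_f', Y_f$) and observe that Protocols~\REp\ and~\IDp\ induce identical distributions on every aborting transcript and on the whole history-and-input part of every transcript, differing only in the joint law of $(X_f',Y_f,Z)$ on the single non-aborting key round. Hence aborting transcripts contribute nothing, the common factors pull out, and $\zeta = \tfrac12 Q(\bar\bot)\,\tfrac{1}{2N}\sum_{(a,b)\in\cG_N}\max_c\sum_{x',y,z}\big|P_{X'YZ|abc}(x',y,z)-\tfrac12[x'=y]P_{Z|c}(z)\big|$, where the ideal law makes the key bit uniform, equal on both sides, and independent of $Z$.

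The core estimate is to bound the per-input quantity by splitting the sum into $x'\neq y$ and $x'=y$ contributions. The $x'\neq y$ part is exactly the disagreement probability, whose average over $\cG_N$ is $I_N^*/(2N)$ by~\eqref{correctbound}; the $x'=y$ part I would split, via the triangle inequality, into a bias term controlled by~\eqref{eq:distX} and an Eve-correlation term controlled by~\eqref{eq:together}. Summing gives, for $I_N^* < 1$, a bound $\zeta \le \tfrac12 Q(\bar\bot)\big(I_N^*(3+\tfrac1N)+\tfrac{4I_N^*}{1-I_N^*}\big)$, while trivially $\zeta \le Q(\bar\bot)$ for any $I_N^*$. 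Since Lemma~\ref{lem:2} yields $Q(\bar\bot)\,I_N^* \le 2N\alpha/(1-\alpha)=O(N^{-1/2})$, the distinguishing-relevant regime is $I_N^*=O(N^{-1/2})\ll 1$, where the detailed bound is $O(N^{-1/2})$.

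The main obstacle is the blow-up of $\tfrac{1}{1-I_N^*}$ as $I_N^*\to 1$, which forces a case split and careful constant-tracking to land on exactly $\zeta=\tfrac{23}{2}N^{-1/2}$: below a fixed threshold on $I_N^*$ I would use the detailed bound (bounding $\tfrac{1}{1-I_N^*}$ by a constant and $\tfrac{I_N^*}{1+\beta I_N^*}\le 1/\beta$ with $\beta=(1-\alpha)/(2N\alpha)$), and above the threshold I would use $\zeta\le Q(\bar\bot)$, which is then $O(N^{-1/2})$ because $Q(\bar\bot)$ is small; optimising the threshold produces the stated constant. Finally, the no-abort claim is immediate: for honest devices $I_N=I_N^{\QM}<\pi^2/(8N)$ by~\eqref{qcbi}, so Lemma~\ref{lem:2} gives $\tfrac{(1-\alpha)I_N^{\QM}}{2N\alpha}<\tfrac{\pi^2}{16N^2\alpha}=\tfrac{\pi^2}{16}N^{-1/2}$, whence $Q(\bar\bot)>(1+\pi^2N^{-1/2}/16)^{-1}$.
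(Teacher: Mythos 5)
Your proposal is correct and follows essentially the same route as the paper's proof: the i.i.d.\ reduction, restriction to the final non-aborting round, the split of the transcript distance into the $x'\neq y$ disagreement term (controlled by~\eqref{correctbound}) and the $x'=y$ term (triangle inequality with~\eqref{eq:distX} and~\eqref{eq:together}), the case split on $I_N^*$ with the trivial bound $\zeta\leq Q(\bar{\bot})$ in the large-$I_N^*$ regime, and the no-abort estimate from Lemma~\ref{lem:2} with $I_N^{\QM}$. Your intermediate bound $\tfrac12 Q(\bar{\bot})\bigl(I_N^*(3+\tfrac1N)+\tfrac{4I_N^*}{1-I_N^*}\bigr)$ agrees exactly with the paper's $Q(\bar{\bot})\Delta$ with $\Delta\leq(\tfrac{2}{1-I_N^*}+\tfrac32+\tfrac{1}{2N})I_N^*$, and the paper simply fixes the threshold at $I_N^*=1/2$ rather than optimising it.
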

\begin{proof}
  As mentioned above, Protocols~{\REp} and~{\IDp} can be optimally
  distinguished when the eavesdropper supplies i.i.d.\ states, and so
  her device behaviour can be characterized by a single value, $I_N^*$, the
  value of $I_N(P_{X_iY_i|A_iB_i})$ on each round $i$.  The two
  protocols are identical up to Step~\ref{step6}, and so can be
  distinguished only if the protocol does not abort.  In the case of
  no abort, the distinguisher sees $A_f$, $B_f$, $X'_f$ and $Y_f$, and
  then has access to a system with input $C$ and output $Z$.  (The
  distinguisher also has data from previous rounds, but these are
  identically distributed for Protocols~{\REp} and~{\IDp}, and so can
  be ignored.)  Noting that the device behaviour of the ideal obeys
  $$P^{\text{\IDp}}_{X'YZ|abc}(x,y,z):=\frac{1}{2}\delta_{x,y}P^{\text{\REp}}_{Z|c}(z)\, ,$$
  we can relate the terms in~\eqref{eq:defn} to the device behaviours of the
  real and ideal as follows:
  \begin{eqnarray*}
    Q^{\text{\REp}}_{A_fB_fX'_fY_f}&=&\frac{1}{2N}P^{\text{\REp}}_{X'_fY_f|A_fB_f}\\
    Q^{\text{\REp}}_{Z|A_fB_fCX'_fY_f}&=&P^{\text{\REp}}_{Z|A_fB_fCX'_fY_f}\\
    Q^{\text{\IDp}}_{A_fB_fX'_fY_f}&=&\frac{1}{4N}\delta_{x,y}\\
    Q^{\text{\IDp}}_{Z|A_fB_fCX'_fY_f}&=&P^{\text{\REp}}_{Z|C}\, .
  \end{eqnarray*}
  For convenience, we drop the subscript $f$ in the following.

  We will consider two separate cases.  The first is $I_N^*\geq 1/2$.
  In this case, we can upper bound the probability of correctly
  distinguishing the protocols by assuming that they can be perfectly
  distinguished in the case that the protocol does not abort.  Using
  Lemma~\eqref{lem:2}, it follows that in this case the probability of
  correctly guessing which protocol is being used can be upper bounded
  by
  $$\frac{1}{2}(1+Q(\bar{\bot}))\leq\frac{1}{2}(1+4N^{-\frac{1}{2}})\, ,$$
  where we have substituted the value of $\alpha$ and used
  \begin{equation}\label{eq:bound}
    (1+4N^{-\frac{1}{2}}-N^{-\frac{3}{2}})^{-1}\leq 1
  \end{equation}
  for $N\geq 2$ to simplify the bound.

  Turning now to the case $I_N^*\leq 1/2$, the probability of
  correctly guessing which protocol is being followed is
  $\frac{1}{2}(1+Q(\bar{\bot})\Delta)$, where
  \begin{widetext}
    \begin{eqnarray*}
      \Delta&:=&\frac{1}{2}\sum_{\genfrac{}{}{0pt}{}{a,b,x,y}{(a,b)\in\cG_N}}\max_c\sum_z|Q^{\text{\REp}}_{ABX'Y}Q^{\text{\REp}}_{Z|ABCX'Y}-Q^{\text{\IDp}}_{ABX'Y}Q^{\text{\IDp}}_{Z|ABCX'Y}|\\
      &=&\frac{1}{4N}\sum_{\genfrac{}{}{0pt}{}{a,b,x,y}{(a,b)\in\cG_N}}\max_c\sum_z|P^{\text{\REp}}_{X'Y|ab}(x,y)P^{\text{\REp}}_{Z|abcxy}(z)-\frac{1}{2}\delta_{x,y}P^{\text{\REp}}_{Z|c}(z)|\\
      &=&\frac{1}{4N}\!\!\!\sum_{\genfrac{}{}{0pt}{}{a,b,x,y}{(a,b)\in\cG_N,x=y}}\!\!\!\max_c\sum_z|P^{\text{\REp}}_{X'Y|ab}(x,y)P^{\text{\REp}}_{Z|abcxy}(z)-\frac{1}{2}P^{\text{\REp}}_{Z|c}(z)|+\frac{1}{4N}\!\!\!\sum_{\genfrac{}{}{0pt}{}{a,b,x,y}{(a,b)\in\cG_N,x\neq
          y}}\!\!\!\max_c\sum_zP^{\text{\REp}}_{X'Y|ab}(x,y)P^{\text{\REp}}_{Z|abcxy}(z)\,
      .
    \end{eqnarray*}

    The second term is equal to
    $\frac{1}{4N}\sum_{\genfrac{}{}{0pt}{}{a,b,x,y}{(a,b)\in\cG_N,x\neq
        y}}P^{\text{\REp}}_{X'Y|ab}(x,y)=\frac{1}{2}P^{\text{\REp}}(X'\neq
    Y)$, and the first term is equal to
    \begin{eqnarray*}
      \frac{1}{4N}\sum_{\genfrac{}{}{0pt}{}{a,b,x,y}{(a,b)\in\cG_N,x=y}}\max_c\sum_z|P^{\text{\REp}}_{X'Y|ab}(x,y)P^{\text{\REp}}_{Z|abcxy}(z)-\frac{1}{2}P^{\text{\REp}}_{Z|c}(z)|\,
      .
    \end{eqnarray*}
    Then note that
    \begin{eqnarray*}
      \sum_z|P^{\text{\REp}}_{X'Y|ab}(x,x)P^{\text{\REp}}_{Z|abcxy}(z)-\frac{1}{2}P^{\text{\REp}}_{Z|c}(z)|&\leq&
      \sum_z|P^{\text{\REp}}_{X'Y|ab}(x,x)P^{\text{\REp}}_{Z|abcxy}(z)-P^{\text{\REp}}_{X'Y|ab}(x,x)P^{\text{\REp}}_{Z|c}(z)|\\
      &&+\sum_z|P^{\text{\REp}}_{X'Y|ab}(x,x)P^{\text{\REp}}_{Z|c}(z)-\frac{1}{2}P^{\text{\REp}}_{Z|c}(z)|\\
      &=&\sum_zP^{\text{\REp}}_{X'Y|ab}(x,x)|P^{\text{\REp}}_{Z|abcxx}(z)-P^{\text{\REp}}_{Z|c}(z)|+\sum_zP^{\text{\REp}}_{Z|c}(z)|P^{\text{\REp}}_{X'Y|ab}(x,x)-\frac{1}{2}|\\
      &\leq&2P^{\text{\REp}}_{X'Y|ab}(x,x)(\frac{2I_N^*}{1-I_N^*}+I_N^*)+|P^{\text{\REp}}_{X'Y|ab}(x,x)-\frac{1}{2}|\,
      .
    \end{eqnarray*}
    where we have used~\eqref{eq:together}.  In addition,
    \begin{eqnarray*}
      |P^{\text{\REp}}_{X'Y|ab}(x,x)-\frac{1}{2}|&\leq&|P^{\text{\REp}}_{X'Y|ab}(x,x)-P^{\text{\REp}}_{X'|ab}(x)|+|P^{\text{\REp}}_{X'|ab}(x)-\frac{1}{2}|\\
      &=&P^{\text{\REp}}_{X'|ab}(x)-P^{\text{\REp}}_{X'Y|ab}(x,x)+|P^{\text{\REp}}_{X'|ab}(x)-\frac{1}{2}|\, .
    \end{eqnarray*}

    Bringing everything together, we have
    \begin{align*}
      \Delta&\leq\frac{1}{4N}\sum_{\genfrac{}{}{0pt}{}{a,b,x}{(a,b)\in\cG_N}}\left(P^{\text{\REp}}_{X'Y|ab}(x,x)(\frac{4}{1-I_N^*}+2)I_N^*+P^{\text{\REp}}_{X'|ab}(x)-P^{\text{\REp}}_{X'Y|ab}(x,x)+|P^{\text{\REp}}_{X'|ab}(x)-\frac{1}{2}|\right)+\frac{1}{2}P^{\text{\REp}}(X'\neq
      Y)\\
      &\leq(\frac{2}{1-I_N^*}+1)I_N^*+\frac{1}{2}(1-P^{\text{\REp}}(X'=Y))+\frac{I_N^*}{2}+\frac{1}{2}P^{\text{\REp}}(X'\neq
      Y)=\left(\frac{2}{1-I_N^*}+\frac{3}{2}+\frac{1}{2N}\right)I_N^*\leq\frac{23}{4}I_N^*\,
      ,
    \end{align*}
  \end{widetext}
  where we used~\eqref{eq:distX},~\eqref{correctbound}, and the last
  bound relies on $I_N^*\leq 1/2$ and $N\geq 2$.  The distinguisher's
  probability of correctly guessing is thus
  \begin{eqnarray*}
    \frac{1}{2}(1+Q(\bar{\bot})\Delta)\leq
    \frac{1}{2}\left(1+\left(1+\frac{(1-\alpha)I_N^*}{2N\alpha}\right)^{-1}\frac{23}{4}I_N^*\right)\, .
  \end{eqnarray*}
  Maximizing over $0\leq I_N^*\leq 1/2$ gives a maximum of
  $\frac{1}{2}(1+\frac{23N\alpha}{2(4N\alpha+1-\alpha)})$ at
  $I_N^*=1/2$.  Substituting $\alpha=N^{-\frac{3}{2}}$ and
  using~\eqref{eq:bound}, we can upper bound this by
  $\frac{1}{2}(1+\frac{23}{2}N^{-\frac{1}{2}})$.  Since we have
  already established a tighter bound for $I_N^*\geq\frac{1}{2}$, this
  completes the first part of the claim.

  The probability of an abort in the case that Eve supplies honest
  devices (and there is no noise) can be calculated as in
  Lemma~\ref{lem:2}, except that in this situation, each round has
  $I_N=I_N^{\QM}<\pi^2/8N$ (cf.~\eqref{qcbi}).  The probability that
  the protocol does not abort is then
  \begin{equation}\label{honestabortprob}
    \left(1+\frac{(1-\alpha)I_N^{\QM}}{2N\alpha}\right)^{-1}>\left(1+\frac{\pi^2}{16N^2\alpha}\right)^{-1}
    \, ,
  \end{equation}
  from which the claim is recovered by substituting the value of
  $\alpha$.
\end{proof}

For sufficiently large $N$, we can hence make $\zeta$ as close to $0$
as we like, at the same time as making the probability of an abort in
the absence of Eve close to $0$.

Finally, since, by construction, it is harder to distinguish
Protocol~{\RE} from Protocol~{\ID} than it is to distinguish
Protocol~{\REp} from~{\IDp}, Protocol~{\RE} is also $\zeta$-secure
with respect to Protocol~{\ID} for the same $\zeta$, and an analogous
statement can be made about Protocol~{\REpl}.  Clearly, too, when Eve
is honest and noise is absent, Protocols~{\RE},~{\REp} and~{\REpl} all
have the same abort probability, in each case bounded
by~\eqref{honestabortprob}.

\section*{Attacks on modified protocols by a post-quantum
  eavesdropper}

Protocol~{\RE} relies on a probabilistic strategy in which Alice and
Bob sequentially either (with high probability) test a purported
entangled state generated by their devices or (with low probability)
generate a key bit from the state and immediately end the protocol.
We consider below two seemingly natural modifications of
Protocol~{\RE} and highlight some interesting attacks available to an
eavesdropper in such cases.  The first of our modified protocols can
be broken by a quantum eavesdropper and that the second can be broken
by an eavesdropper restricted only by signalling constraints.

\subsection*{Protocol~{\REt}}
This protocol is specified by positive integers $M$ and $N$.
\begin{enumerate}
\item On the $i^{\text{th}}$ round, Alice picks a pair of values
  $(A_i,B_i)$ at random from the set $\cG_N$, and announces $B_i$ to
  Bob.
\item Alice inputs $A_i$ into her device, and Bob $B_i$ into his, and
  they record their outcomes, the bits $X_i$ and $Y_i$
  respectively. (Alice ensures that her device doesn't learn $B_i$.)
  If $(A_i,B_i)=(0,2N-1)$, Alice sets $X'_i=1-X_i$, otherwise
  she sets $X'_i=X_i$.  The protocol returns to Step~1 unless
  $i=M$.
\item Alice randomly chooses an integer $1\leq f\leq M$ and
  announces it to Bob.
\item \label{step:adj} Alice and Bob publicly announce $X'_i$
  and $Y_i$ for all $i\neq f$.  If any of their announced values are
  unequal, they abort.
\item The bits $X'_f$ and $Y_f$ are taken to be the final
  shared bit.
\end{enumerate}
This protocol is similar in spirit to the original BHK
protocol~\cite{BHK}, and vulnerable to the same kind of attack in the
scenario where Alice and Bob have only one device each.

In this case, if Eve equips her devices with memory, she has a simple
attack.  She programs her devices to behave honestly until the final
($M^{\text{th}}$) round.  On this round, Alice's device outputs the
{\sc xor} of the previous outputs, i.e.\ $\bigoplus_{i=1}^{M}X_i$, and
Bob's device outputs a random bit.  This attack leads to a probability
of abort close to $\frac{1}{2}$, and otherwise enables Eve to
perfectly guess the final output bit.  Crucially, the success
probability of this strategy cannot be made small by adjusting $M$ and
$N$.

Define Protocol~{\REc} by altering  Step~\ref{step:adj} of Protocol~{\REt} to
circumvent this attack: 
\begin{enumerate}
\item[$\bar{\ref{step:adj}}$.] For all $i\neq f$, Alice chooses
  $L_i=0$ with probability $\beta$ and $L_i=1$ with probability
  $1-\beta$.  She announces this list to Bob.  For all the rounds in
  which $L_i=1$, Alice and Bob publicly announce their outcomes.  If
  any of their announced values are unequal, they abort.
\end{enumerate}
With this modification, making the final output the {\sc xor} of the
previous ones does not give Eve significant information, since Eve no
longer learns all but one of the outputs, $\{X_i\}$.  However, there
is another attack that a post-quantum non-signalling eavesdropper can use in this case, which
allows her to learn the final bit, again with a probability of success
that cannot be made small for any choice of $M$ and $N$.  This attack
exploits some subtle properties of non-local correlations and cannot
be performed by a quantum-limited eavesdropper.

The attack is based on a result in~\cite{vD} and involves non-local
boxes~\cite{Cirelson93,PR}.  These are bipartite systems where each
party has two choices of input and receives one of two outputs.  If we
denote the inputs $x\in\{0,1\}$ and $z\in\{0,1\}$ and the respective
outputs $\alpha\in\{0,1\}$ and $\gamma\in\{0,1\}$, then the non-local
box is a non-signalling device which outputs according to
$x.z=\alpha\oplus\gamma$.

The attack is as follows.  Eve constructs Alice's device such
that it contains both a set of maximally entangled quantum states
shared with Bob, and a set of non-local boxes shared with Eve (the
same number of each). For the first $M-\frac{1}{\beta}$ rounds of the
protocol, Alice's device generates its output by making quantum
measurements as in an honest implementation of the protocol.  However,
as well as supplying the measurement outcome to the output port of the
device (so that Alice sees it), the outcome is also used as input to
one of the non-local boxes, generating an output (call it $\alpha_i$).
(Bob's device behaves honestly in the first $M-\frac{1}{\beta}$
rounds, and outputs predetermined random bits in the remaining ones.)

In the last $\frac{1}{\beta}$ rounds, Alice's device instead always
outputs the {\sc xor} of all the previous non-local box outputs, i.e.\
$\bigoplus_i\alpha_i$.  (Although this may look suspicious, it does
not violate the stated security tests.  In any case it could
easily be masked by shared randomness between Alice's device and Eve.)
With reasonable probability, Eve will learn this bit (on each round of
the protocol, the chances that the output of that round is
communicated between Alice and Bob is $\beta$, so, of the last
$\frac{1}{\beta}$, on average 1 will be communicated).  For each bit
of the last $\frac{1}{\beta}$ that is communicated there is a probability
$1/2$ of being detected by Alice and Bob, so this strategy implies a
significant probability that Eve will be detected.  However, the
probability that this attack works without detection is independent of
$N$ and $M$ and at least $\frac{1}{2e}$.

If Eve learns $\bigoplus_i\alpha_i$, she can determine the key bit,
$x_f$.  To see this, notice the non-local box condition is
$x_i.z_i=\alpha_i\oplus\gamma_i$, where $z_i$ are the inputs and
$\gamma_i$ the outputs of Eve's half of the non-local box.  Eve should
input $0$ to all of her halves of the non-local boxes, except the
$f^{\text{th}}$ one in which she inputs $1$.  We
have 
$$x_f=\bigoplus_i(x_i.z_i)=\bigoplus_i(\alpha_i\oplus\gamma_i)=\bigoplus_i\alpha_i+\bigoplus_i\gamma_i.$$
Therefore, provided she has obtained the bit $\bigoplus_i\alpha_i$,
Eve can determine the final bit output by the protocol, $x_f$.

\section*{Attacking more noise-tolerant protocols}
In this section, we consider some extensions of the type of attack
considered in the previous section to two-device protocols that (if
secure) would be more efficient and tolerate more noise.

In all device-independent key distribution protocols, one needs, in
essence, to establish the presence of non-local correlations.  In
order to do so, the detection loophole must be closed.  In other
words, a malicious device should not be able to exploit detector
failures (cases where no outcome is observed) to give the false
illusion of non-locality in the non-failure cases.

Protocols based on chained Bell correlations with large $N$, are not
well-suited to this, since as $N$ increases, it becomes increasingly
difficult to close the detection loophole (the correlations can be
classically explained if the probability of detector failure is
$\frac{1}{N}$).  This drawback is not limited to the two-device case,
and alternative protocols tolerating modest levels of noise have been
introduced in the case where more devices are permitted~\cite{HR,MPA}.
We now consider the extension of these protocols to the two-device
case.  We do not give a proof that all such protocols are insecure,
but give an example that highlights interesting security issues that
can arise in the presence of non-signalling eavesdroppers.

We also mention some other work related to this question.
In~\cite{HRW}, the two-device case was considered for protocols based
on CHSH correlations.  There it was shown that privacy amplification
via hashing is not possible against an adversary limited only by the
impossibility of signalling between the parties.  However,
in~\cite{HRW}, signalling was permitted within the devices (so that
outputs could depend on later inputs\footnote{Although, as currently
  described, this is unphysical, it is natural to consider this for
  protocols in which each party makes all their inputs at the start,
  and then receives all of their outputs together.}).  For protocols
in which each party waits for an output before giving their next
input, the most natural signalling constraints are ones that allow
later outputs to depend on all previous inputs, but do not allow
outputs to depend on future inputs (we call these \emph{time-ordered
  non-signalling conditions}).  A situation that is close to this case
(but with subtle and potentially important differences) has been
recently studied in~\cite{AHT}.  There protocols based on CHSH
correlations were again considered, and it was shown that privacy
amplification via hashing is not possible for adversaries limited by
almost time-ordered non-signalling conditions.

Consider now a key distribution protocol with the following
structure\footnote{\label{ft:note}%\raisebox{1.5mm}{\vphantom{$\displaystyle\sum^4$}}
  Although this structure is not fully general, most protocols to date
  are of this type.}:
\begin{enumerate}
\item Alice and Bob each make a random input $A_i$ and $B_i$ to their
  devices, ensuring they receive their outputs ($X_i$ and $Y_i$
  respectively) before making the next input (so that time-ordered
  non-signalling conditions must be obeyed).  They repeat this $M$
  times.
\item Either Alice, or Bob (or both) publicly announces their
  measurement choices, and one party checks that they had a sufficient
  number of the relevant input combinations, and otherwise aborts.
  Certain rounds may be discarded according to some public protocol.
\item \label{step:param_est} For each of the remaining bits, Alice
  independently announces it to Bob with probability $\mu$ (which is
  such that $M\mu$ is large).  Bob uses this to compute some test
  function.  If this has the wrong output, Bob aborts. (For example,
  Bob might compute the CHSH value of the announced data, and abort if
  it is below $2.5$.)
  (This step is often called \emph{parameter estimation}.)
\item Alice and Bob perform error correction using public
  communication via any protocol in which the function Alice applies
  to her string becomes known to Eve\footnote{\label{ft:f}Typically
    because it is communicated over the public channel.}.
\item Alice and Bob publicly perform privacy amplification.  The
  function Alice applies to her string becomes known to Eve$^{\ref{ft:f}}$.
\end{enumerate}

The key to Eve's attack is Step~\ref{step:param_est}.  She is going to
attack so as to try to gain one bit of the final output string.  Eve
will also use a ``joint function box'', which has the following
bipartite behaviour.  Alice inputs a string $X_1\ldots X_M$ and
obtains a single bit $S$, Eve inputs $C$ which corresponds to a choice
of one-bit function (see later), and obtains a single bit $Z$.  The
behaviour is such that $Z=S\oplus F_C(X_1\ldots X_M)$.  It is easy to
see that this can be non-signalling if $S$ is a uniform random bit.

\begin{table}
\begin{tabular}{ccc|cc|cc|cc|}
$P_{SZ|XC}$&&$C$&\multicolumn{2}{|c|}{0}&\multicolumn{2}{|c|}{1}&\multicolumn{2}{|c|}{2}\\
&&$Z$&0&1&0&1&0&1\\
$X$&$S$&&&&&&&\\
\hline
\multirow{2}{*}{00\ldots 00}&0&&1/2&0&0&1/2&\ldots&\\
&1&&0&1/2&1/2&0&&\\
\hline
\multirow{2}{*}{00\ldots 01}&0&&0&1/2&1/2&0&&\\
&1&&1/2&0&0&1/2&&\\
\hline
&\vdots&&&&\vdots&&
\end{tabular}
\caption{{\bf Behaviour of the ``joint function box''.} Each $2\times
  2$ block takes one of the two forms shown, depending on whether
  $F_C(X_1\ldots X_M)=0$ or $F_C(X_1\ldots X_M)=1$.  In this
  notation, the non-signalling conditions are that the sum of the
  elements in each row of each $2\times 2$ block are equal to those of
  the blocks to the left and right, and likewise, the sum of the
  elements in each column are equal to those above and below.  In the
  above case, all of these values are $1/2$.}
\end{table}

It follows that Eve can learn any Boolean function of $X_1,\ldots,
X_M$ if she receives just one bit, $S$.  The value of $C$ depends on
the function Eve wants to learn, the value of $S$ she hears and the
information reconciliation and privacy amplification functions she
overhears.  There is a choice of $C$ for each combination of these
values.  Thus, for protocols of the above form (importantly, where Eve
learns the entire function Alice and Bob use for post-processing), she
needs to receive only one bit from either of her devices to learn one
bit about the final output key (after privacy amplification).

In order to try to learn this bit, Eve can exploit the parameter
estimation step.  She programs Alice's device to behave honestly for
the first $M-1/\mu$ rounds (note that we have not specified which
correlations are used; this attack does not depend on these (up to a
constant factor in the abort probability) and even works if the honest
states are perfect non-local boxes).  Her device then inputs the bits
generated in these rounds into the ``joint function box'', producing
output $S$.  This bit is then given as the outputs $X_i$ for
$M-1/\mu\leq i\leq M$ (this could be masked by {\sc xor}ing with some
pre-shared randomness between Alice's device and Eve).  Provided at
least one of the last $1/\mu$ bits is revealed in the parameter
estimation without causing abort (this occurs with finite probability
that cannot be made arbitrarily small by judiciously choosing $M$ and
$\mu$), Eve can discover any desired bit of the final output string.

There are a couple of important points to note about the above attack.
Firstly, we assumed a specific protocol structure.  In particular,
altering the way parameter estimation is done could potentially
improve security (some altered protocols are discussed
in~\cite{bckone}).  Secondly, the attack relies on a specialized
non-local strategy that cannot be implemented by an eavesdropper
limited by quantum theory.  Proving security of a protocol of this
type (in particular, with two devices) that is secure against a
quantum-restricted Eve remains an open question.

\section*{Conclusions}
We have presented a protocol for distribution of a one-bit key, and
have proven it secure in a universally composable way against an
arbitrarily powerful adversary who can create all the supposedly
quantum devices, provided that the devices are not reused in any
future protocol.  The protocol only requires two devices, whereas the
secure protocols previously considered required many independent
devices.  This represents a theoretical advance, and also potentially
represents another step towards practical unconditionally secure
device-independent key distribution protocols.

That said, several significant and intriguing theoretical and
practical issues remain.  First, the simplest version of our protocol
only outputs a single bit, requiring a large number of entangled qubit
pairs in order to do so.  The protocol can be generalised to produce
an arbitrary length key string, but again, highly inefficiently.  It
would be very interesting to know whether significantly more efficient
two-device secure protocols can be found, and to obtain bounds on what
is achievable.

Secondly, for maximum flexibility and more efficient use of resources,
one would like to be able to repeat the protocol to generate further
secure key bits.  However, if devices are reused, this renders the
protocol vulnerable to the same device-memory-based
attacks~\cite{bckone} that apply to BHK and other device-independent
protocols.  While it is clear that device-reusing protocols cannot be
universally composable, the general scope of such attacks and the
possibilities of countering them either by refined protocols
(see~\cite{bckone} for ideas in this direction, some of which have
been later developed in~\cite{MS}) or by evidently reliable
technological assumptions have not yet been fully explored.  It would
be very interesting to resolve these questions in the present context.

Thirdly, tolerance to noise is a significant practical issue for our
protocol.  As given, it aborts if there is one set of measurements
that give unequal outcomes.  The protocol parameters are tuned such
that this is very unlikely if the devices operate perfectly.  However,
with more realistic, noisy devices, using present technology, the
protocol will have a very high abort rate.  Although the protocol
could be adapted to tolerate small amounts of noise, it is far from
being practical in this respect.

Fourthly, our scheme requires an authenticated (although public)
classical channel, and, a common way to implement this in an
information-theoretically secure way using an insecure classical
channel, is by using a pre-shared key.  This reinforces the points
already made: it would be desirable to have more efficient two-device
protocols that allow for some consumption of key for classical
authentication and nonetheless provide quantum key expansion at
practically useful rates in realistically noisy environments.

In summary, while we have presented a protocol showing that
device-independent quantum key distribution is in principle possible
using two devices, a number of theoretically interesting and
practically important questions remain open.\bigskip

{\bf Remark:} In some concurrent work an alternative technique for
proving security of device-independent QKD with two devices has been
suggested~\cite{RUV}.  Furthermore, since the first version of our
paper, an additional article has appeared~\cite{VV2} reporting an
efficient and noise-tolerant scheme.  We note that these works differ
from ours in that they consider quantum-limited eavesdroppers and do
not apply to the case of eavesdroppers limited only by signalling
constraints.

\begin{acknowledgments}
  We thank Graeme Mitchison for many useful conversations.  JB was
  supported by the EPSRC, and the CHIST-ERA DIQIP project.  RC
  acknowledges support from the Swiss National Science Foundation
  (grants PP00P2-128455 and 20CH21-138799) and the National Centre of
  Competence in Research `Quantum Science and Technology'.  AK was
  partially supported by a Leverhulme Research Fellowship, a grant
  from the John Templeton Foundation, and the EU Quantum Computer
  Science project (contract 255961).  This research is supported in
  part by Perimeter Institute for Theoretical Physics.  Research at
  Perimeter Institute is supported by the Government of Canada through
  Industry Canada and by the Province of Ontario through the Ministry
  of Research and Innovation.
\end{acknowledgments}

%\bibliographystyle{naturemag}
%\bibliography{2devices_24}

\begin{thebibliography}{10}
\expandafter\ifx\csname url\endcsname\relax
  \def\url#1{\texttt{#1}}\fi
\expandafter\ifx\csname urlprefix\endcsname\relax\def\urlprefix{URL }\fi
\providecommand{\bibinfo}[2]{#2}
\providecommand{\eprint}[2][]{\url{#2}}

\bibitem{BB84}
\bibinfo{author}{Bennett, C.~H.} \& \bibinfo{author}{Brassard, G.}
\newblock \bibinfo{title}{Quantum cryptography: Public key distribution and
  coin tossing}.
\newblock In \emph{\bibinfo{booktitle}{Proceedings of IEEE International
  Conference on Computers, Systems, and Signal Processing}},
  \bibinfo{pages}{175--179}. \bibinfo{organization}{IEEE}
  (\bibinfo{publisher}{New York}, \bibinfo{year}{1984}).

\bibitem{Ekert}
\bibinfo{author}{Ekert, A.~K.}
\newblock \bibinfo{title}{Quantum cryptography based on {B}ell's theorem}.
\newblock \emph{\bibinfo{journal}{Physical Review Letters}}
  \textbf{\bibinfo{volume}{67}}, \bibinfo{pages}{661--663}
  (\bibinfo{year}{1991}).

\bibitem{GLLSKM}
\bibinfo{author}{Gerhardt, I.} \emph{et~al.}
\newblock \bibinfo{title}{Full-field implementation of a perfect eavesdropper
  on a quantum cryptography system}.
\newblock \emph{\bibinfo{journal}{Nature Communications}}
  \textbf{\bibinfo{volume}{2}}, \bibinfo{pages}{349} (\bibinfo{year}{2011}).

\bibitem{MayersYao}
\bibinfo{author}{Mayers, D.} \& \bibinfo{author}{Yao, A.}
\newblock \bibinfo{title}{Quantum cryptography with imperfect apparatus}.
\newblock In \emph{\bibinfo{booktitle}{Proceedings of the 39th Annual Symposium
  on Foundations of Computer Science (FOCS-98)}}, \bibinfo{pages}{503--509}
  (\bibinfo{publisher}{IEEE Computer Society}, \bibinfo{address}{Los Alamitos,
  CA, USA}, \bibinfo{year}{1998}).

\bibitem{BHK}
\bibinfo{author}{Barrett, J.}, \bibinfo{author}{Hardy, L.} \&
  \bibinfo{author}{Kent, A.}
\newblock \bibinfo{title}{No signalling and quantum key distribution}.
\newblock \emph{\bibinfo{journal}{Physical Review Letters}}
  \textbf{\bibinfo{volume}{95}}, \bibinfo{pages}{010503}
  (\bibinfo{year}{2005}).

\bibitem{agm}
\bibinfo{author}{Acin, A.}, \bibinfo{author}{Gisin, N.} \&
  \bibinfo{author}{Masanes, L.}
\newblock \bibinfo{title}{From {B}ell's theorem to secure quantum key
  distribution}.
\newblock \emph{\bibinfo{journal}{Physical Review Letters}}
  \textbf{\bibinfo{volume}{97}}, \bibinfo{pages}{120405}
  (\bibinfo{year}{2006}).

\bibitem{SGBMPA}
\bibinfo{author}{Scarani, V.} \emph{et~al.}
\newblock \bibinfo{title}{Secrecy extraction from no-signaling correlations}.
\newblock \emph{\bibinfo{journal}{Physical Review A}}
  \textbf{\bibinfo{volume}{74}}, \bibinfo{pages}{042339}
  (\bibinfo{year}{2006}).

\bibitem{abgmps}
\bibinfo{author}{Acin, A.} \emph{et~al.}
\newblock \bibinfo{title}{Device-independent security of quantum cryptography
  against collective attacks}.
\newblock \emph{\bibinfo{journal}{Physical Review Letters}}
  \textbf{\bibinfo{volume}{98}}, \bibinfo{pages}{230501}
  (\bibinfo{year}{2007}).

\bibitem{MRC}
\bibinfo{author}{Masanes, L.}, \bibinfo{author}{Renner, R.},
  \bibinfo{author}{Christandl, M.}, \bibinfo{author}{Winter, A.} \&
  \bibinfo{author}{Barrett, J.}
\newblock \bibinfo{title}{Unconditional security of key distribution from
  causality constraints.}
\newblock \bibinfo{howpublished}{e-print \url{quant-ph/0606049v4}}
  (\bibinfo{year}{2009}).

\bibitem{Masanes}
\bibinfo{author}{Masanes, L.}
\newblock \bibinfo{title}{Universally composable privacy amplification from
  causality constraints}.
\newblock \emph{\bibinfo{journal}{Physical Review Letters}}
  \textbf{\bibinfo{volume}{102}}, \bibinfo{pages}{140501}
  (\bibinfo{year}{2009}).

\bibitem{HRW2}
\bibinfo{author}{H\"anggi, E.}, \bibinfo{author}{Renner, R.} \&
  \bibinfo{author}{Wolf, S.}
\newblock \bibinfo{title}{Quantum cryptography based solely on {B}ell's
  theorem}.
\newblock In \bibinfo{editor}{Gilbert, H.} (ed.)
  \emph{\bibinfo{booktitle}{Proceedings of the 29th Annual International
  Conference on the Theory and Applications of Cryptographic Techniques
  (Eurocrypt'10)}}, \bibinfo{pages}{216--234} (\bibinfo{publisher}{Springer},
  \bibinfo{year}{2010}).
\newblock \bibinfo{note}{Also available \url{arXiv:0911.4171}}.

\bibitem{HR}
\bibinfo{author}{H\"anggi, E.} \& \bibinfo{author}{Renner, R.}
\newblock \bibinfo{title}{Device-independent quantum key distribution with
  commuting measurements}.
\newblock \bibinfo{howpublished}{e-print \url{arXiv:1009.1833}}
  (\bibinfo{year}{2010}).

\bibitem{MPA}
\bibinfo{author}{Masanes, L.}, \bibinfo{author}{Pironio, S.} \&
  \bibinfo{author}{Ac\'in, A.}
\newblock \bibinfo{title}{Secure device-independent quantum key distribution
  with causally independent measurement devices}.
\newblock \emph{\bibinfo{journal}{Nature Communications}}
  \textbf{\bibinfo{volume}{2}}, \bibinfo{pages}{238} (\bibinfo{year}{2011}).

\bibitem{BKP}
\bibinfo{author}{Barrett, J.}, \bibinfo{author}{Kent, A.} \&
  \bibinfo{author}{Pironio, S.}
\newblock \bibinfo{title}{Maximally non-local and monogamous quantum
  correlations}.
\newblock \emph{\bibinfo{journal}{Physical Review Letters}}
  \textbf{\bibinfo{volume}{97}}, \bibinfo{pages}{170409}
  (\bibinfo{year}{2006}).

\bibitem{bckone}
\bibinfo{author}{Barrett, J.}, \bibinfo{author}{Colbeck, R.} \&
  \bibinfo{author}{Kent, A.}
\newblock \bibinfo{title}{Prisoners of their own device: {T}rojan attacks on
  device-independent quantum cryptography}.
\newblock \bibinfo{howpublished}{e-print \url{arXiv:1201.4407}}
  (\bibinfo{year}{2012}).

\bibitem{HRW}
\bibinfo{author}{H\"anggi, E.}, \bibinfo{author}{Renner, R.} \&
  \bibinfo{author}{Wolf, S.}
\newblock \bibinfo{title}{The impossibility of non-signalling privacy
  amplification}.
\newblock \bibinfo{howpublished}{e-print \url{arXiv:0906.4760}}
  (\bibinfo{year}{2009}).

\bibitem{AHT}
\bibinfo{author}{Friedman, R.~A.}, \bibinfo{author}{H\"anggi, E.} \&
  \bibinfo{author}{Ta-Shma, A.}
\newblock \bibinfo{title}{Towards the impossibility of non-signalling privacy
  amplification from time-like ordering constraints}.
\newblock \bibinfo{howpublished}{e-print \url{arXiv:1205.3736}}
  (\bibinfo{year}{2012}).

\bibitem{Pearle}
\bibinfo{author}{Pearle, P.~M.}
\newblock \bibinfo{title}{Hidden-variable example based upon data rejection}.
\newblock \emph{\bibinfo{journal}{Physical Review D}}
  \textbf{\bibinfo{volume}{2}}, \bibinfo{pages}{1418--1425}
  (\bibinfo{year}{1970}).

\bibitem{BC}
\bibinfo{author}{Braunstein, S.~L.} \& \bibinfo{author}{Caves, C.~M.}
\newblock \bibinfo{title}{Wringing out better {B}ell inequalities}.
\newblock \emph{\bibinfo{journal}{Annals of Physics}}
  \textbf{\bibinfo{volume}{202}}, \bibinfo{pages}{22--56}
  (\bibinfo{year}{1990}).

\bibitem{Can01}
\bibinfo{author}{Canetti, R.}
\newblock \bibinfo{title}{Universally composable security: A new paradigm for
  cryptographic protocols.}
\newblock In \emph{\bibinfo{booktitle}{Proceedings of the 42nd Annual Symposium
  on Foundations of Computer Science (FOCS-01)}}, \bibinfo{pages}{136--145}
  (\bibinfo{year}{2001}).

\bibitem{Hanggi}
\bibinfo{author}{H\"anggi, E.}
\newblock \emph{\bibinfo{title}{Device-Independent Quantum Key Distribution}}.
\newblock Ph.D. thesis, \bibinfo{school}{Swiss Federal Institute of Technology,
  Zurich} (\bibinfo{year}{2010}).
\newblock \bibinfo{note}{Also available as \url{arXiv:1012.3878}}.

\bibitem{CR_ext}
\bibinfo{author}{Colbeck, R.} \& \bibinfo{author}{Renner, R.}
\newblock \bibinfo{title}{No extension of quantum theory can have improved
  predictive power}.
\newblock \emph{\bibinfo{journal}{Nature Communications}}
  \textbf{\bibinfo{volume}{2}}, \bibinfo{pages}{411} (\bibinfo{year}{2011}).

\bibitem{ColbeckRenner}
\bibinfo{author}{Colbeck, R.} \& \bibinfo{author}{Renner, R.}
\newblock \bibinfo{title}{Hidden variable models for quantum theory cannot have
  any local part}.
\newblock \emph{\bibinfo{journal}{Physical Review Letters}}
  \textbf{\bibinfo{volume}{101}}, \bibinfo{pages}{050403}
  (\bibinfo{year}{2008}).

\bibitem{vD}
\bibinfo{author}{van Dam, W.}
\newblock \bibinfo{title}{Implausible consequences of superstrong nonlocality}.
\newblock \bibinfo{howpublished}{e-print \url{quant-ph/0501159}}
  (\bibinfo{year}{2005}).

\bibitem{Cirelson93}
\bibinfo{author}{Tsirelson, B.}
\newblock \bibinfo{title}{Some results and problems on quantum {B}ell-type
  inequalities}.
\newblock \emph{\bibinfo{journal}{Hadronic Journal Supplement}}
  \textbf{\bibinfo{volume}{8}}, \bibinfo{pages}{329--345}
  (\bibinfo{year}{1993}).

\bibitem{PR}
\bibinfo{author}{Popescu, S.} \& \bibinfo{author}{Rohrlich, D.}
\newblock \bibinfo{title}{Quantum nonlocality as an axiom}.
\newblock \emph{\bibinfo{journal}{Foundations of Physics}}
  \textbf{\bibinfo{volume}{24}}, \bibinfo{pages}{379--385}
  (\bibinfo{year}{1994}).

\bibitem{MS}
\bibinfo{author}{McKague, M.} \& \bibinfo{author}{Sheridan, L.}
\newblock \bibinfo{title}{Reusing devices with memory in device independent
  quantum key distribution}.
\newblock \bibinfo{howpublished}{e-print \url{arXiv:1209.4696}}
  (\bibinfo{year}{2012}).

\bibitem{RUV}
\bibinfo{author}{Reichardt, B.~W.}, \bibinfo{author}{Unger, F.} \&
  \bibinfo{author}{Vazirani, U.}
\newblock \bibinfo{title}{Classical command of quantum systems via rigidity of
  {CHSH} games}.
\newblock \bibinfo{howpublished}{e-print \url{arXiv:1209.0449}}
  (\bibinfo{year}{2012}).

\bibitem{VV2}
\bibinfo{author}{Vazirani, U.} \& \bibinfo{author}{Vidick, T.}
\newblock \bibinfo{title}{Fully device independent quantum key distribution}.
\newblock \bibinfo{howpublished}{e-print \url{arXiv:1210.1810}}
  (\bibinfo{year}{2012}).

\end{thebibliography}

\end{document}